\newtheorem{thm}{Theorem}[section]  
\newtheorem{prp}[thm]{Proposition}
\renewcommand{\theequation}{\thesection.\arabic{equation}}
\let\oldmarginpar\marginpar
\renewcommand\marginpar[1]{\-\oldmarginpar[\raggedleft\footnotesize #1]%
  {\raggedright\footnotesize #1}}
\newcommand{\fock}{\mathcal{F}}		
\newcommand{\di}{\textrm{d}}		
\newcommand{\Ncal}{\mathcal{N}}		
\newcommand{\scal}[2]{\big<#1,#2\big>} 
\newcommand{\Rbb}{\mathbb{R}}		
\newcommand{\Nbb}{\mathbb{N}}		
\newcommand{\tr}{\operatorname{tr}}
\newcommand{\tagg}[1]{ \stepcounter{equation} \tag{\theequation} \label{#1} } 
\newcommand{\eps}{\varepsilon}
\newcommand{\bR}{{\mathbb R}}
\newcommand{\be}{\begin{equation}}
\newcommand{\ee}{\end{equation}}
\newcommand{\cF}{{\cal F}}
\newcommand{\cH}{{\cal H}}
\newcommand{\cE}{{\cal E}}
\newcommand{\cN}{{\cal N}}
\newcommand{\bN}{{\mathbb N}}
\newcommand{\wt}{\widetilde}
\def \bea{\begin{eqnarray}}
\def \eea{\end{eqnarray}}
\title{Mean-Field Dynamics of Fermions with Relativistic Dispersion} 
\author{Niels Benedikter, Marcello Porta\thanks{Supported by ERC Grant MAQD 240518}{}\ \ and Benjamin Schlein\thanks{Partially supported by ERC Grant MAQD 240518}\\ \\ Institute of Applied Mathematics, University of Bonn\\ Endenicher Allee 60, 53115 Bonn, Germany}
\begin{document}
\maketitle

\begin{abstract}
We extend the derivation of the time-dependent Hartree-Fock equation recently obtained in \cite{BPS} to fermions with a relativistic dispersion law. The main new ingredient is the propagation of  semiclassical commutator bounds along the pseudo-relativistic Hartree-Fock evolution.
\end{abstract}
\section{Introduction}

We are interested in the dynamics of a system of $N$ fermions moving in three spatial dimensions with a relativistic dispersion law. In units with $\hbar = c = 1$, the evolution is governed by the Schr\"odinger equation
\begin{equation}\label{eq:schr1} i\partial_t \psi_{N,t} = \left[\sum_{j=1}^N \sqrt{-\Delta_{x_j} + m^2}  + \lambda\sum_{i < j}^N V(x_i -x_j) \right] \psi_{N,t} \end{equation}
for the wave function $\psi_{N,t} \in L_a^2 (\bR^{3N})$. In accordance with Pauli's principle $L^2_a (\bR^{3N})$ denotes the subspace of $L^2 (\bR^{3N})$ consisting of all functions which are antisymmetric with respect to any permutation of the $N$ particles. The function $V: \Rbb^3 \to \Rbb$ describes the two-body interaction among the particles.

\medskip

We are interested, in particular, in the mean-field limit, characterized by $N \gg 1$ and weak interaction $|\lambda| \ll 1$, so that $\lambda N^{2/3} = 1$ is fixed. For technical reasons, we also consider large masses $m$, keeping $m N^{-1/3} = m_0$ fixed in the limit. Introducing the semiclassical parameter $\eps = N^{-1/3}$, we can then rewrite (\ref{eq:schr1}) as
\begin{equation}\label{eq:schr2} i\eps \partial_t \psi_{N,t} = \left[ \sum_{j=1}^N \sqrt{-\eps^2 \Delta_{x_j} + m_0^2} + \frac{1}{N} \sum_{i<j}^N V (x_i -x_j) \right] \psi_{N,t} \end{equation}
{F}rom the physical point of view, it is important to understand the dynamics of initial data which can be easily prepared in labs. Hence, it makes sense to study the evolution of initial data close to the ground state of a Hamiltonian of the form 
\begin{equation}\label{eq:HNtrap} H_N^{\text{trapped}} = \sum_{j=1}^N \left[ \sqrt{-\eps^2 \Delta_{x_j} + m_0^2} + V_\text{ext} (x_j) \right] + \frac{1}{N} \sum_{i<j}^N V (x_i -x_j),\end{equation}
where $V_\text{ext}: \Rbb^3 \to \Rbb$ is an external potential, trapping the particles in a volume of order one. It is expected that the ground state of (\ref{eq:HNtrap}) can be approximated by the Slater determinant with one-particle reduced density $\omega_N$ minimizing the (relativistic) Hartree-Fock energy functional
\begin{equation}\label{eq:HFfunc}
\begin{split} \cE_{\text{HF}} (\omega_N) = \; &\tr \, \left[ \sqrt{-\eps^2 \Delta + m_0^2} + V_\text{ext} \right] \omega_N \\ &+ \frac{1}{2N} \int dx dy \, V(x-y) \left( \omega_N (x,x) \omega_N (y,y) - |\omega_N (x,y)|^2 \right) \end{split} \end{equation}
among all orthogonal projections $\omega_N$ on $L^2 (\bR^3)$ with $\tr \omega_N = N$ (recall that the reduced density of an $N$-particle Slater determinant is such an orthogonal projection). 

\medskip

In \cite{BPS}, we considered the evolution of $N$ non-relativistic fermions, governed by the 
Schr\"odinger equation
\begin{equation}\label{eq:schr-cl} i\eps \partial_t \psi_{N,t} = \left[ \sum_{j=1}^N -\frac{\eps^2 \Delta_{x_j}}{2m_0} + \frac{1}{N} \sum_{i<j}^N V(x_i -x_j)  \right] \psi_{N,t}.  \end{equation}
In particular, we were interested in the evolution of initial data close to Slater determinants 
minimizing a non-relativistic Hartree-Fock energy (similar to (\ref{eq:HFfunc}), but with a non-relativistic dispersion law). To this end, we argued that minimizers of the Hartree-Fock energy satisfy  semiclassical commutator estimates of the form
\begin{equation}\label{eq:semi} \tr \lvert[ x , \omega_{N}]\rvert \leq C N \eps, \quad \text{ and } \quad \tr \lvert[\eps \nabla, \omega_{N}]\rvert \leq C N \eps.\end{equation}
Motivated by this observation, we assumed initial data to be close to Slater determinants with reduced one-particle density satisfying (\ref{eq:semi}). For such initial data\footnote{In fact, instead of assuming $\tr\lvert [x,\omega_{N}]\rvert \leq C N \eps$, in \cite{BPS} we only imposed the weaker condition $\tr\lvert [e^{ip \cdot x}, \omega_{N}]\rvert \leq CN (1+|p|) \eps$, for all $p \in \bR$. In the present paper, however, we find it more convenient to work with the commutator $[x,\omega_N]$.}, we proved that for sufficiently regular interaction potential $V$ the many-body evolution can be approximated by the time-dependent non-relativistic Hartree-Fock equation
\begin{equation}\label{eq:HFcl} i \eps \partial_t \omega_{N,t} = \left[- \frac{\eps^2 \Delta}{2 m_0} + (V * \rho_t) - X_t , \omega_{N,t} \right]. \end{equation}
Here $\rho_t (x) = N^{-1} \omega_{N,t} (x,x)$ is the density of particles close to $x \in \bR^3$ and $X_t$ is the exchange operator, having the integral kernel $X_t (x,y) = N^{-1} V(x-y) \omega_{N,t} (x,y)$. We remark that, prior to \cite{BPS}, the convergence towards (\ref{eq:HFcl}) was proven in \cite{EESY} 
for analytic interactions and for short times (while convergence towards the Vlasov evolution for mean field dynamics of fermionic systems was established in \cite{NS,Sp}).

\medskip

In this paper we proceed analogously but for fermions with relativistic dispersion. Similarly as in the non-relativistic case, the arguments presented in \cite{BPS} and based on semiclassical analysis suggest that (approximate) minimizers of the Hartree-Fock energy (\ref{eq:HFfunc}) satisfy the commutator bounds (\ref{eq:semi}). For this reason, we will
consider the evolution (\ref{eq:schr2}) for initial data close to Slater determinants, with reduced density $\omega_N$ satisfying (\ref{eq:semi}). For such initial data, we will show in Theorem \ref{thm:main} below that the solution of the Schr\"odinger equation (\ref{eq:schr2}) stays close to a Slater determinant with one-particle reduced density evolving according to the relativistic Hartree-Fock equation
\begin{equation}\label{eq:HF} i\eps \partial_t \omega_{N,t} = \left[ \sqrt{-\eps^2 \Delta + m_0^2} + (V * \rho_t) - X_t , \omega_{N,t} \right], \end{equation}
where, like in (\ref{eq:HFcl}), $\rho_t (x) = N^{-1} \omega_{N,t} (x,x)$ and $X_t (x,y) = N^{-1} V(x-y) \omega_{N,t} (x,y)$. 

\medskip

For initial data minimizing the Hartree-Fock energy (\ref{eq:HFfunc}), the typical momentum of the particles is of order $\eps^{-1}$, meaning that the expectation of $\eps |\nabla|$ is typically of order one. Hence, for $m_0 \gg 1$, we can expand the relativistic dispersion as
\[ \sqrt{-\eps^2\Delta + m_0^2} = m_0 \, \sqrt{1- \frac{\eps^2 \Delta}{m_0^2}} \simeq m_0 \left(1- \frac{\eps^2\Delta}{2m_0^2} \right) = m_0 + \frac{-\eps^2 \Delta}{2m_0} \]
Since the constant $m_0$ only produces a harmless phase, this implies that in the limit of large $m_0$, one can approximate the solutions of the relativistic Schr\"odinger equation (\ref{eq:schr2}) and of the relativistic Hartree-Fock equation (\ref{eq:HF}) by the solutions of the corresponding non-relativistic equations (\ref{eq:schr-cl}) and, respectively, (\ref{eq:HFcl}). For fixed $m_0$ of order one, however, the relativistic dynamics cannot be compared with the classical evolutions. 

\medskip

If we start from (\ref{eq:schr1}) and consider the limit of large $N \gg 1$ and weak interaction $\lambda N^{2/3} = 1$ without scaling the mass $m$, we obtain a Schr\"odinger equation like (\ref{eq:schr2}), but with $m_0$ replaced by $\eps m$ (recall that $\eps = N^{-1/3}$). In the limit $N \gg 1$, this evolution can be compared with the massless relativistic Schr\"odinger equation
\begin{equation}\label{eq:schr3} i\eps \partial_t \psi_{N,t} = \left[ \sum_{j=1}^N \eps |\nabla_{x_j}| + \frac{1}{N} \sum_{i<j}^N V(x_i -x_j) \right] \psi_{N,t}. \end{equation}
In this case, we expect the dynamics of initial data close to Slater determinants satisfying the commutator estimates (\ref{eq:semi}) to be approximated by the Hartree-Fock equation
\begin{equation}\label{eq:HFml} i \eps \partial_t \omega_{N,t} = \Big[ \eps |\nabla| + (V* \rho_t) - X_t , \omega_{N,t} \Big]. \end{equation}
For technical reasons, we do not consider this case in the present work. Proving the convergence of (\ref{eq:schr3}) towards (\ref{eq:HFml}) remains an interesting open problem.

\medskip

\section{Main Result and Sketch of Proof}

To state our main theorem, we switch to a Fock space representation. We denote by 
\[ \cF = \bigoplus_{n \in \bN} L^2_a (\bR^{3n}, dx_1\dots dx_n) \]
the fermionic Fock space over $L^2 (\bR^3)$. For $f \in L^2 (\bR^3)$, we define creation and annihilation operators $a^* (f)$ and $a(f)$ satisfying canonical anticommutation relations 
\[ \left\{ a(f) , a^* (g) \right\} = \langle f,g \rangle, \quad \left\{ a(f) , a(g) \right\} = \left\{ a^* (f), a^* (g) \right\} = 0 \]
for all $f,g \in L^2 (\bR^3)$. We also use operator valued distributions $a_x^*, a_x$, $x \in \bR^3$. In terms of these distributions, we define the Hamilton operator
\begin{equation}\label{eq:hamfock} \cH_N = \int dx \, a_x^* \, \sqrt{-\eps^2 \Delta_x + m_0^2}  \;  a_x + \frac{1}{2N} \int dx dy \, V(x-y) a_x^* a_y^* a_y a_x. \end{equation}
We notice that $\cH_N$ commutes with the number of particles operator 
\[ \cN = \int dx \, a_x^* a_x.\]
When restricted to the $N$-particle sector, $\cH_N$ agrees with the Hamiltonian generating the evolution (\ref{eq:schr2}). 

\medskip

Let $\omega_N$ be an orthogonal projection on $L^2 (\bR^3)$, with $\tr \, \omega_N = N$. Then there are orthonormal functions $f_1, \dots , f_N \in L^2 (\bR^3)$ with $\omega_N = \sum_{j=1}^N |f_j \rangle \langle f_j|$. We complete $f_1, \dots , f_N$ to an orthonormal basis $( f_j )_{j\in \bN}$ of $L^2 (\bR^3)$. We define a unitary map $R_{\omega_N}$ on $\cF$. To this end, we denote by $\Omega = (1, 0 , \dots )$ the Fock vacuum and we set
\begin{equation}\label{eq:Romega2} R_{\omega_N} \Omega = a^\ast(f_1)\cdots a^\ast(f_N)\Omega, \end{equation}
a Slater determinant with reduced density $\omega_N$. Moreover we require that 
\begin{equation}\label{eq:Romega1} R^*_{\omega_N} a(f_i) R_{\omega_N} = \left\{ \begin{array}{ll} a (f_i) &\quad \text{if $i> N$} \\ a^* (f_i) &\quad \text{if $i \leq N$}. \end{array} \right. \end{equation}
The operator $R^*_{\omega_N}$ implements a fermionic Bogoliubov transformation on $\cF$. We consider the time evolution of initial data of the form $R_{\omega_N} \xi_N$, for a $\xi_N \in \cF$ with $\langle \xi_N, \cN \xi_N \rangle \leq C$ uniformly in $N$ (i.\,e.\ $R_{\omega_N} \xi_N$ is close to the $N$-particle Slater determinant $R_{\omega_N} \Omega$).

\medskip

We are now ready to state our main theorem.
\begin{thm}\label{thm:main}
Let $V \in L^1(\Rbb^3)$ with \begin{equation}\label{eq:ass} \int \lvert \widehat V(p)\rvert (1+\lvert p\rvert)^2 \di p < \infty.\end{equation}
Let $\omega_N$ be a sequence of orthogonal projections on $L^2 (\bR^3)$ with $\tr \, \omega_N = N$, satisfying the semiclassical commutator bounds (\ref{eq:semi}). Let $\xi_N$ be a sequence in $\cF$ with $\langle \xi_N , \cN \xi_N \rangle \leq C$ uniformly in $N$. We consider the time evolution 
\begin{equation}\label{eq:psiNt} \psi_{N,t} = e^{-i \cH_N t/ \eps} R_{\omega_N} \xi_N \end{equation}
generated by the Hamiltonian (\ref{eq:hamfock}), with $\eps = N^{-1/3}$ and with a fixed $m_0 > 0$. 
Here $R_{\omega_N}$ denotes the unitary implementor of a Bogoliubov transformation defined in (\ref{eq:Romega1}) and (\ref{eq:Romega2}). Let $\gamma_{N,t}^{(1)}$ be the one-particle reduced density associated with $\psi_{N,t}$. Then there exist constants $c,C > 0$ such that
\begin{equation}\label{eq:HSbd} \tr \left\lvert \gamma_{N,t}^{(1)} - \omega_{N,t} \right\rvert^2 \leq C \exp (c \exp ( c |t|)) \end{equation}
where $\omega_{N,t}$ is the solution of the time-dependent Hartree-Fock equation (\ref{eq:HF}) with initial data $\omega_{N,t=0} = \omega_N$. 
\end{thm}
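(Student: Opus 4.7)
The plan is to follow the general strategy of \cite{BPS} adapted to the relativistic dispersion. Since $R_{\omega_N}$ implements a Bogoliubov transformation, we may study the fluctuation vector $\xi_{N,t} = \cU_N(t) \xi_N$ where $\cU_N(t) = R^*_{\omega_{N,t}} e^{-i \cH_N t / \eps} R_{\omega_N}$ and $\omega_{N,t}$ solves \eqref{eq:HF}. A direct computation shows that the trace norm square in \eqref{eq:HSbd} is controlled by $\langle \xi_{N,t}, \cN \xi_{N,t}\rangle$ (plus a term of order one) up to a universal constant, so the goal reduces to proving a double-exponential bound on the expectation of $\cN$ in the fluctuation dynamics. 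This bound will come from a Gronwall argument applied after differentiating $\langle \xi_{N,t}, \cN \xi_{N,t}\rangle$ in $t$: one computes the generator $\cL_N(t)$ of $\cU_N(t)$, whose commutator with $\cN$ collects only those pieces of the interaction $N^{-1} V(x-y) a^*_x a^*_y a_y a_x$ that fail to conserve excitation number after Bogoliubov conjugation.

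The cubic and quartic terms that remain can be estimated exactly as in \cite{BPS}, since neither the relativistic dispersion nor its propagation enters their analysis; the only essential modifications appear where one uses $\omega_{N,t}$ itself. Concretely one needs, at every time $t$, the semiclassical commutator bounds
\begin{equation*}
\tr \lvert [x,\omega_{N,t}] \rvert \leq C N \eps e^{c|t|}, \qquad \tr \lvert [\eps \nabla, \omega_{N,t}] \rvert \leq C N \eps e^{c|t|},
\end{equation*}
together with associated $L^p$-type control on the density $\rho_t$ and the exchange kernel $X_t$. Granted such bounds, the interaction terms in $\cL_N(t) - \cL_N(t)^*$ can be shown to act on a vector $\xi$ as $\lVert (\cL_N(t) - \cL_N(t)^*) \xi \rVert \leq C e^{c|t|} \langle \xi, (\cN+1)\xi\rangle$, yielding by Gronwall the double-exponential growth claimed.

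The main obstacle, and the genuinely new part of this work, is propagating the commutator bounds \eqref{eq:semi} along the pseudo-relativistic Hartree-Fock flow \eqref{eq:HF}. Differentiating $[x, \omega_{N,t}]$ in time one obtains a commutator with $\sqrt{-\eps^2\Delta + m_0^2}$ (which, unlike $-\eps^2\Delta/(2m_0)$, is a non-local operator) plus commutators with $V*\rho_t$ and with $X_t$. The first term can be handled using the integral representation
\begin{equation*}
\sqrt{-\eps^2\Delta+m_0^2} = m_0 + \frac{1}{\pi}\int_0^\infty \frac{d\mu}{\sqrt{\mu}}\, \frac{-\eps^2 \Delta + m_0^2 - m_0\sqrt{\mu + m_0^2}}{\mu - \eps^2\Delta + m_0^2},
\end{equation*}
which reduces the needed estimates to commutators with the resolvents $(\mu - \eps^2 \Delta + m_0^2)^{-1}$, themselves accessible by a single commutator with $-\eps^2\Delta$ and the uniform bound $\tr|[\eps\nabla, \omega_{N,t}]|$. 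The potential terms $V*\rho_t$ and $X_t$ are controlled by \eqref{eq:ass} together with the bootstrapped bounds on $\omega_{N,t}$ itself, yielding a closed Gronwall system for the two quantities in \eqref{eq:semi}. Combining this propagation result with the Fock-space estimates sketched above then gives \eqref{eq:HSbd}.
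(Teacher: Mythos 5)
Your high-level strategy matches the paper: introduce the fluctuation vector $\xi_{N,t}$ via $\psi_{N,t}=R_{\omega_{N,t}}\xi_{N,t}$, control $\tr\lvert\gamma^{(1)}_{N,t}-\omega_{N,t}\rvert^2$ by $\langle\xi_{N,t},\cN\xi_{N,t}\rangle$, run a Gronwall argument, and identify the propagation of the semiclassical commutator bounds along the relativistic Hartree--Fock flow as the genuinely new technical ingredient. The first three steps carry over from \cite{BPS} essentially unchanged (indeed, the kinetic term drops out exactly in $[\cH_N,d\Gamma(\omega_{N,t})]-d\Gamma(i\eps\partial_t\omega_{N,t})]$), and you correctly flag the propagation step as the crux.

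However, the integral representation you propose for $\sqrt{-\eps^2\Delta+m_0^2}$ is incorrect. Writing $B:=-\eps^2\Delta+m_0^2$, your formula expands as
\begin{equation*}
m_0 + \frac{1}{\pi}\int_0^\infty\frac{\di\mu}{\sqrt\mu}\,\frac{B}{\mu+B}
 - \frac{m_0}{\pi}\int_0^\infty\frac{\di\mu}{\sqrt\mu}\,\frac{\sqrt{\mu+m_0^2}}{\mu+B},
\end{equation*}
and the third integral diverges logarithmically at $\mu\to\infty$ (the integrand behaves as $1/\mu$), so the right-hand side is not even well defined, and in particular it cannot equal $\sqrt B$. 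Even the uncorrected standard formula $\sqrt B=\tfrac1\pi\int_0^\infty\tfrac{\di\mu}{\sqrt\mu}\tfrac{B}{\mu+B}$ converges only strongly on the domain of $\sqrt B$, not in operator norm, which makes it awkward to use inside a trace-norm estimate of a double commutator. The paper avoids these difficulties by an extra algebraic step: one first evaluates the inner commutator explicitly,
\begin{equation*}
\big[\sqrt{-\eps^2\Delta+m_0^2},x\big] = -\eps\,\frac{\eps\nabla}{\sqrt{-\eps^2\Delta+m_0^2}},
\end{equation*}
which is a \emph{bounded} operator, and only then applies the integral representation
$\tfrac{1}{\sqrt A}=\tfrac1\pi\int_0^\infty\tfrac{\di\lambda}{\sqrt\lambda}(A+\lambda)^{-1}$
to the bounded operator $(-\eps^2\Delta+m_0^2)^{-1/2}$, where the $\lambda$-integral converges absolutely in norm and yields resolvent commutators controllable by $\tr\lvert[\eps\nabla,\omega_{N,s}]\rvert$. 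To repair your argument, you should either switch to this decomposition, or use a convergent subtracted representation such as
$\sqrt B = m_0 + \tfrac1\pi\int_0^\infty \di\mu\,\sqrt\mu\big((\mu+m_0^2)^{-1}-(\mu+B)^{-1}\big)$,
and verify that the resulting resolvent-commutator estimates still close the Gronwall system. As written, the propagation lemma --- and hence the theorem --- does not follow.
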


\medskip

\noindent{\it Remarks:}
\begin{enumerate}
\item The bound (\ref{eq:HSbd}) should be compared with $\tr \, (\gamma^{(1)}_{N,t})^2$ and $\tr \, (\omega_{N,t})^2$, which are both of order $N$. The $N$-dependence in (\ref{eq:HSbd}) is optimal, since one can easily find a sequence $\xi_N \in \fock$ with $\scal{\xi_N}{\Ncal \xi_N} \leq C$ such that $\gamma_{N,0}^{(1)}-\omega_{N,0} = \mathcal{O}(1)$ (for example, just take $\xi_N = a^\ast(f_{N+1})\Omega$).
\item As in \cite{BPS}, under the additional assumptions that $\di\Gamma(\omega_N)\xi_N =0$ and $\scal{\xi_N}{\Ncal^2 \xi_N} \leq C$ for all $N \in \Nbb$, we find the trace norm estimate 
\begin{equation}\label{eq:trbd}\begin{split}
   \tr \lvert \gamma^{(1)}_{N,t} - \omega_{N,t} \rvert & \leq C N^{1/6}  \exp( c \exp(c \lvert t\rvert) ).
\end{split}\end{equation}
\item We can also control the convergence of higher order reduced densities.  If $\gamma^{(k)}_{N,t}$ denotes the $k$-particles reduced density associated with (\ref{eq:psiNt}), and if $\omega^{(k)}_{N,t}$ is the antisymmetric tensor product of $k$ copies of the solution $\omega_{N,t}$ of the Hartree-Fock equation (\ref{eq:HF}), we find, similarly to \cite[Theorem 2.2]{BPS},
\begin{equation}\label{eq:kbd} \tr \left\lvert \gamma_{N,t}^{(k)} - \omega_{N,t}^{(k)} \right\rvert^2 \leq C N^{k-1} \exp (c \exp (c |t|)). \end{equation}
This should be compared with $\tr \, (\gamma_{N,t}^{(k)})^2$ and $\tr (\omega_{N,t}^{(k)})^2$, which are of order $N^k$.
\item Just like in the non-relativistic model \cite[Appendix A]{BPS} the exchange term $[X_t,\omega_{N,t}]$ in the Hartree-Fock equation (\ref{eq:HF}) is of smaller order and can be neglected. The bounds (\ref{eq:HSbd}), (\ref{eq:trbd}), (\ref{eq:kbd}) remain true if we replace $\omega_{N,t}$ with the solution of the Hartree equation  
\begin{equation}
 \label{eq:hartree}
i\eps \partial_t \wt{\omega}_{N,t} = \left[\sqrt{-\eps^2 \Delta + m_0^2} + (V* \wt{\rho}_t), \wt{\omega}_{N,t} \right]
\end{equation}
with the density $\wt{\rho}_t (x) = N^{-1} \wt{\omega}_{N,t} (x,x)$. 
\item The relativistic Hartree-Fock equation (\ref{eq:HF}) and the relativistic Hartree equation (\ref{eq:hartree}) still depend on $N$ through the semiclassical parameter $\eps = N^{-1/3}$. As $N \to \infty$, the Hartree-Fock and the Hartree dynamics can be approximated by the relativistic Vlasov evolution. If $W_{N,t} (x,v)$ denotes the Wigner transform of the solution $\omega_{N,t}$ of (\ref{eq:HF}) (or, analogously, of the solution $\wt{\omega}_{N,t}$ of (\ref{eq:hartree})), we expect that in an appropriate sense $W_{N,t} \to W_{\infty ,t}$ as $N \to \infty$, where $W_{\infty,t}$ satisfies the relativistic Vlasov equation
\[ \partial_t W_{\infty,t} + \frac{v}{\sqrt{v^2 + m_0^2}} \cdot \nabla_x W_{\infty,t} - \nabla_v W_{\infty,t} \cdot \nabla \left( V * \rho_{\infty,t} \right) = 0 \, ,\]
where $\rho_{\infty,t} (x) = \int \di v\,W_{\infty,t} (x,v)$. In fact, the convergence of the relativistic Hartree evolution towards the relativistic Vlasov dynamics has been shown in \cite{AMS} for particles interacting through a Coulomb potential. In this case, however, a rigorous mathematical understanding of the relation with many-body quantum dynamics is still missing (because of the regularity assumption (\ref{eq:ass}), Theorem \ref{thm:main} does not cover the Coulomb interaction). In view of applications to the dynamics of gravitating fermionic stars (such as white dwarfs and neutron stars) and the related phenomenon of gravitational collapse studied in \cite{HS,HLLS}, this is an interesting and important open problem (at the level of the ground state energy, this problem has been solved in \cite{LY}). Notice that the corresponding questions for bosonic stars have been addressed in \cite{ES,MS}. 
 \end{enumerate}

\medskip

Next, we explain the strategy of the proof of Theorem \ref{thm:main}, which is based on the proof of Theorem 2.1 in \cite{BPS}. In fact, the main body of the proof can be taken over from \cite{BPS} 
without significant changes. There is, however, one important ingredient of the analysis of \cite{BPS} which requires non-trivial modifications, namely the propagation of the commutator bounds (\ref{eq:semi}) along the solution of the Hartree-Fock equation (\ref{eq:HF}). We will discuss this part of the proof of Theorem \ref{thm:main} separately in Section \ref{sec:semiclass}.  

\begin{proof}[Sketch of the proof of Theorem \ref{thm:main}]

We introduce the vector $\xi_{N,t} \in \cF$ describing the fluctuations around the Slater determinant with reduced density $\omega_{N,t}$ given by the solution of the Hartree-Fock equation (\ref{eq:HF}) by requiring that 
\[ \psi_{N,t} = e^{-i\cH_N t/\eps} R_{\omega_N} \xi_N =: R_{\omega_{N,t}} \xi_{N,t}.\]
This gives $\xi_{N,t} = U_N (t;0) \xi_N$, with the fluctuation dynamics
\[ U_N (t;s) = R_{\omega_{N,t}}^* e^{-i \cH_N t/\eps} R_{\omega_{N,s}}.\]
Notice that $U_N (t;s)$ is a two-parameter group of unitary transformations. The problem of proving that $\psi_{N,t}$ is close to the Slater determinant $R_{\omega_{N,t}} \Omega$ reduces to showing that the expectation of the number of particles in $\xi_{N,t}$ stays of order one, i.\,e.\ small compared to the $N$ particles in the Slater determinant. In fact, it is easy to check (see \cite[Section 4]{BPS}) the bound for the Hilbert-Schmidt norm   
\begin{equation}\label{eq:ncalest} \| \gamma_{N,t}^{(1)} - \omega_{N,t} \|_{\text{HS}} \leq C \langle \xi_{N,t}, \cN \xi_{N,t} \rangle = C \langle \xi_N , U_N^* (t;0) \cN U_N (t;0) \xi_N \rangle \, .  \end{equation}

\medskip

To bound the growth of the expectation of the number of particles with respect to the fluctuation dynamics $U_N (t;s)$ we use Gronwall's Lemma. Differentiating the expectation on the r.h.s. of (\ref{eq:ncalest}) with respect to time gives (see \cite[Proof of Prop.\ 3.3]{BPS}) 
\[ \begin{split} i\eps \frac{\di}{\di t} \langle \xi_N, U^*_N (t;0)& \cN U_N (t;0) \xi_N \rangle \\ &= \langle \xi_N, U_N^* (t;0) R_{\omega_{N,t}}^* \Big( d\Gamma (i\eps \partial_t \omega_{N,t}) - [ \cH_N , d\Gamma (\omega_{N,t})] \Big) R_{\omega_{N,t}} U_N (t;0) \xi_N \rangle \end{split} \]
where $d\Gamma (J)$ is the second quantization of the one-particle operator $J$, its action on the $n$-particle sector being given by $d\Gamma (J)|_{L^2_a(\Rbb^{3n})} = \sum_{i=1}^n J^{(i)}$, where $J^{(i)}$ denotes the operator acting as $J$ on the $i$-th particle and as the identity on the other $(n-1)$ particles. There are important cancellations between the two terms in the parenthesis. In particular, since
\[\begin{split}  \left[ \int \di x\,a_x^* \sqrt{-\eps^2\Delta + m_0^2} \, a_x \, , \, d\Gamma (\omega_{N,t}) \right] &= \left[ d\Gamma \left( \sqrt{-\eps^2 \Delta + m_0^2} \right), d\Gamma (\omega_{N,t}) \right] \\ &= d\Gamma \left( \left[ \sqrt{-\eps^2 \Delta + m_0^2} \, , \, \omega_{N,t} \right] \right) \end{split} \]
the contribution of the kinetic energy cancels exactly. The remaining terms are then identical to those found in the non-relativistic case. Hence, analogously to \cite[Prop.\ 3.3]{BPS}, we conclude that
\begin{equation}\label{eq:gron1} \begin{split} &i \eps \frac{d}{dt} \langle \xi_N, U_N^* (t;0) \cN U_N (t;0) \xi_N \rangle \\ &= - \frac{4i}{N} \text{Im } \int \di x \di y\, V(x-y) \big\{ \langle U_N (t;0) \xi_N , a^* (u_{t,x}) a (u_{t,x}) a (v_{t,y}) a (u_{t,y})  U_N (t;0) \xi_N \rangle  \\ &\hspace{4.7cm} + \langle U_N (t;0) \xi_N , a^* (v_{t,x}) a (v_{t,x}) a (v_{t,y}) a (u_{t,y}) U_N (t;0) \xi_N \rangle \\  &\hspace{4.7cm}  + \langle U_N (t;0) \xi_N , a (v_{t,x}) a (u_{t,x}) a (v_{t,y}) a (u_{t,y}) U_N (t;0) \xi_N \rangle \big\} \end{split} \end{equation}
where the functions $u_{t,x}$ and $v_{t,x}$ are defined by \[ R^*_{\omega_{N,t}} a_x R_{\omega_{N,t}} = a (u_{t,x}) +a^* (v_{t,x}).\] It is easy to express $u_{t,x}$ (which is actually a distribution) and $v_{t,x}$ (a $L^2$-function) in terms of $\omega_{N,t}$; see, for example, \cite[Eq.\ (2.27)]{BPS} (but notice that here we have replaced $\overline{v}$ with $v$). Notice that in \cite[Prop.\ 3.3]{BPS}, we also considered the expectation of higher moments of $\cN$. This can be done in the relativistic setting as well, and is needed to prove the trace-norm bound (\ref{eq:trbd}).

\medskip

Proceeding as in the proof of \cite[Lemma 3.5]{BPS}, we can bound the terms on the r.\,h.\,s.\ of (\ref{eq:gron1}) to show that
\begin{equation}\label{eq:gron2}\begin{split}  
\Big| i\eps \frac{d}{dt} \, \langle \xi_N, U^*_N &(t;0) (\cN+1) U_N (t;0) \xi_N \rangle \Big| \\ &\leq C N^{-1} \sup_{p \in \bR^3} \frac{\tr \lvert[e^{ip \cdot x}, \omega_{N,t}]\rvert}{1+|p|} \, \langle \xi_N, U^*_N (t;0) (\cN+1) U_N (t;0) \xi_N \rangle.
\end{split} \end{equation}
Using the integral representation
\[ [e^{ip\cdot x} , \omega_{N,t} ] = \int_0^1 ds \, e^{is p \cdot x} \, [ip \cdot x , \omega_{N,t}] \, e^{i(1-s) p \cdot x} \]
we conclude that
\begin{equation}\label{eq:exp-bd} \sup_{p \in \bR^3} \frac{\tr\lvert[e^{ip\cdot x} , \omega_{N,t}]\rvert}{1+\lvert p\rvert} \leq \tr\lvert[x, \omega_{N,t}]\rvert. \end{equation}
Hence, (\ref{eq:gron2}) implies the bound (\ref{eq:HSbd}) in Theorem \ref{thm:main}, if we can show that there exist constants $C,c > 0$ with 
\begin{equation}\label{eq:needed} \tr\, |[x,\omega_{N,t}]| \leq C N \eps \exp (c|t|) \end{equation}
for all $t \in \bR$. We show (\ref{eq:needed}) in Proposition \ref{prop:semi} below.
\end{proof}

\section{Propagation of the Semiclassical Structure}
\label{sec:semiclass}

The goal of this section is to show the estimate (\ref{eq:needed}), which is needed in the proof of Theorem \ref{thm:main}. To this end, we use the assumption (\ref{eq:semi}) on the initial data, and 
we propagate the commutator estimates along the Hartree-Fock evolution. This is the genuinely 
new part of the present paper, where the ideas of \cite{BPS} need to be adapted to the relativistic dispersion of the particles. 

\begin{prp}\label{prop:semi} Let $V \in L^1(\Rbb^3)$ with 
\begin{equation}\label{eq:ass2} \int \lvert \widehat V(p)\rvert (1+\lvert p\rvert)^{2} \di p < \infty.\end{equation}
Let $\omega_N$ be a trace-class operator on $L^2(\Rbb^3)$ with $0 \leq \omega_N \leq 1$ and $\tr \omega_N = N$, satisfying the commutator estimates (\ref{eq:semi}). Denote by $\omega_{N,t}$ the solution of the Hartree-Fock equation (\ref{eq:HF}) (with $\eps = N^{-1/3}$) with initial data $\omega_{N,0}= \omega_N$. Then there exist constants $C,c > 0$ such that
\[ \label{eq:ass3} \begin{split}
  \tr \lvert [ x, \omega_{N,t}] \rvert & \leq C N \eps \exp (c |t|) \quad \text{and}\\
 \tr \lvert [ \eps \nabla ,  \omega_{N,t} ]\rvert & \leq C N \eps \exp (c|t|)
 \end{split}\]
for all $t \in \bR$. 
\end{prp}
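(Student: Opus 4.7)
The plan is to differentiate $[x,\omega_{N,t}]$ and $[\varepsilon\nabla,\omega_{N,t}]$ in time along the Hartree-Fock flow, bound the resulting source terms by a linear combination of the two commutator trace-norms, and close the estimate by Grönwall. Writing $H_t := \sqrt{-\varepsilon^2\Delta+m_0^2} + V*\rho_t - X_t$ for the mean-field Hamiltonian, the Hartree-Fock equation combined with the Jacobi identity give, for any time-independent operator $A$,
\[ i\varepsilon \partial_t [A,\omega_{N,t}] = [H_t, [A,\omega_{N,t}]] + [[A,H_t],\omega_{N,t}]. \]
The first term on the right is a commutator with the self-adjoint generator and is absorbed by a unitary conjugation in a Duhamel integration, leaving
\[ \tr\bigl|[A,\omega_{N,t}]\bigr| \leq \tr\bigl|[A,\omega_N]\bigr| + \frac{1}{\varepsilon}\int_0^t \tr\bigl|[[A,H_s],\omega_{N,s}]\bigr|\, ds. \]
Everything hinges on estimating the integrand for $A \in \{x,\varepsilon\nabla\}$.

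For $A = x$, the Hartree potential is a multiplication operator, so $[x, V*\rho_s] = 0$. The exchange contribution $[[x,X_s],\omega_{N,s}]$ is handled exactly as in \cite{BPS}: using the Fourier decomposition $X_s = N^{-1}\int \hat V(k)\, M_{e^{ik\cdot x}}\,\omega_{N,s}\, M_{e^{-ik\cdot x}}\, dk$ together with Jacobi and $[x, M_{e^{ik\cdot x}}]=0$, one obtains an estimate of order $N^{-1}\|\hat V\|_1\, \tr|[x,\omega_{N,s}]|$, negligible thanks to the $N^{-1}$ prefactor. The only genuinely new piece is the kinetic commutator, which by the functional calculus rule $[x_j, f(-i\varepsilon\nabla)] = i\varepsilon (\partial_j f)(-i\varepsilon\nabla)$ equals
\[ [x,\sqrt{-\varepsilon^2\Delta+m_0^2}] = i\varepsilon\, \frac{P}{\sqrt{P^2+m_0^2}}, \qquad P := -i\varepsilon\nabla. \]
The factor $\varepsilon$ here cancels the $\varepsilon^{-1}$ of Duhamel, so it suffices to bound $\tr|[P/\sqrt{P^2+m_0^2}, \omega_{N,s}]|$ by a constant times $\tr|[\varepsilon\nabla, \omega_{N,s}]|$.

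This is the main new technical obstacle. The plan is to convert the non-local operator into a resolvent integral via $a^{-1/2} = \pi^{-1}\int_0^\infty s^{-1/2}(a+s)^{-1}\, ds$, yielding
\[ \frac{P}{\sqrt{P^2+m_0^2}} = \frac{1}{\pi}\int_0^\infty \frac{ds}{\sqrt s}\, \frac{P}{P^2+m_0^2+s}. \]
Expanding $[P(P^2+m_0^2+s)^{-1}, \omega_{N,s}]$ by Leibniz together with the resolvent identity $[R,\omega] = -R[P^2,\omega]R$ (with $R=(P^2+m_0^2+s)^{-1}$) reduces it to a finite sum of terms of the form $(\text{bounded op.})\cdot[P,\omega_{N,s}]\cdot(\text{bounded op.})$, where the bounded factors have operator norms $\leq (m_0^2+s)^{-1}$ (for $R$) and $\leq (2\sqrt{m_0^2+s})^{-1}$ (for $PR$, by the elementary inequality $2|p|\sqrt{m_0^2+s}\leq p^2+m_0^2+s$). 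The resulting $s$-integrand is of order $s^{-1/2}(m_0^2+s)^{-1}$, integrable at both endpoints, and gives the desired trace-norm bound with a constant depending only on $m_0$.

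For $A = \varepsilon\nabla$ the analysis parallels \cite{BPS} with no essentially new ideas, since the relativistic kinetic energy commutes with $\varepsilon\nabla$. The Hartree term gives $[\varepsilon\nabla, V*\rho_s] = \varepsilon\,(\nabla V)*\rho_s$; expanding in Fourier and applying the bound $\tr|[e^{ik\cdot x},\omega_{N,s}]| \leq |k|\,\tr|[x,\omega_{N,s}]|$ (already used in (\ref{eq:exp-bd})) together with $|\hat\rho_s(k)| \leq \|\rho_s\|_1 \leq C$ yields a contribution $\leq C\varepsilon\, \tr|[x,\omega_{N,s}]|$ with constant proportional to $\int (1+|k|)^2|\hat V(k)|\, dk$, finite by (\ref{eq:ass2}). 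The exchange contribution again produces a small $\leq CN^{-1}\|\hat V\|_1\,\tr|[\varepsilon\nabla,\omega_{N,s}]|$. Setting $\Phi(t) := \tr|[x,\omega_{N,t}]| + \tr|[\varepsilon\nabla,\omega_{N,t}]|$ and invoking the initial hypothesis (\ref{eq:semi}), I arrive at
\[ \Phi(t) \leq CN\varepsilon + C\int_0^{|t|} \Phi(s)\, ds, \]
whence Grönwall delivers $\Phi(t) \leq CN\varepsilon \exp(c|t|)$, which is the claim (with the obvious time-reversal argument handling $t<0$).
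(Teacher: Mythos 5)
Your proposal is correct and follows essentially the same route as the paper: Jacobi identity plus Duhamel conjugation against the Hartree--Fock propagator, the resolvent representation $a^{-1/2}=\pi^{-1}\int_0^\infty s^{-1/2}(a+s)^{-1}\,ds$ combined with the resolvent commutator identity to control $[\,P(P^2+m_0^2)^{-1/2},\omega_{N,s}]$ by $\tr|[P,\omega_{N,s}]|$, the Fourier expansion of $X_s$ for the $O(N^{-1})$ exchange terms, and a coupled Gr\"onwall estimate for $\tr|[x,\omega_{N,t}]|+\tr|[\varepsilon\nabla,\omega_{N,t}]|$. The only cosmetic difference is that you apply Leibniz inside the $s$-integral rather than first factoring out one power of $(-\varepsilon^2\Delta+m_0^2)^{-1/2}$, which yields the same $s^{-1/2}(m_0^2+s)^{-1}$ integrand and the same $m_0$-dependent constant.
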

\begin{proof}
We define the Hartree-Fock Hamiltonian
\[h(t) := \sqrt{-\eps^2 \Delta + m_0^2}+(V\ast\rho_t) -X_t\]
where $\rho_t(x) = N^{-1} \omega_{N,t}(x,x)$ and $X_t$ is the exchange operator defined by the integral kernel $X_t (x,y) = N^{-1} V(x-y) \omega_{N,t} (x,y)$ (note that $\rho_t$ and $X_t$ depend on the solution $\omega_{N,t}$ of the Hartree-Fock equation (\ref{eq:HF})). Then $\omega_{N,t}$ 
satisfies the equation
\begin{equation}
 \label{eq:commutatorHF} i\eps \partial_t \omega_{N,t} = [ h(t),\omega_{N,t}].
\end{equation}
Using the Jacobi identity we obtain
\begin{equation}
 \label{eq:timeder}
\begin{split}
i \eps \partial_t [ x, \omega_{N,t}] & = [ x , [h(t),\omega_{N,t}]] \\& = [h(t),[ x,\omega_{N,t}]] + \left[ \omega_{N,t}, \left[ \sqrt{-\eps^2\Delta +m_0^2} \, , \, x \right] \right] - [\omega_{N,t},[X_t, x]].
\end{split}
\end{equation}
We can eliminate the first term on the r.h.s. of the last equation by conjugating $[x,\omega_{N,t}]$ with the two-parameter group $W(t;s)$ generated by the self-adjoint operators $h(t)$, satisfying 
\begin{equation}\label{eq:W} i \eps \partial_t W (t,s) = h(t) W (t,s) \quad \text{with} \quad W (s,s) = 1 \quad \text{for all $s \in \bR$}.\end{equation}
In fact, we have
\[ \begin{split}  i\eps \partial_t \, W^* (t;0) &[x, \omega_{N,t}] W(t;0) \\ & = W^* (t;0) \Big( \Big[ \omega_{N,t}, \Big[ \sqrt{-\eps^2\Delta +m_0^2} \, , \, x \Big] \Big] - [\omega_{N,t},[X_t, x]] \Big) W(t;0) \end{split} \]
and therefore
\begin{equation*}
\begin{split}
 W^*( &t,0)  [ x,\omega_{N,t}] W(t,0)\\
& = [x, \omega_{N,0}] + \frac{1}{i\eps} \int_0^t \frac{\di}{\di s} \big(W^*(s,0) [x, \omega_{N,s}] W(s,0) \big) \di s \\
& = [ x, \omega_{N,0}] + \frac{1}{i\eps} \int_0^t W^*(s,0) \Big( \Big[ \omega_{N,t}, \Big[ \sqrt{-\eps^2\Delta +m_0^2}, x \Big] \Big] - [\omega_{N,t},[X_t, x]] \Big) W(s,0) \, \di s.
\end{split}
\end{equation*}
This implies that
\begin{align*}
 \tr \lvert [ x,\omega_{N,t}]\rvert \leq \tr  \lvert [ x,\omega_{N,0}]\rvert & + \frac{1}{\eps} \int_0^t \di s\, \tr \Big\lvert \Big[\omega_{N,s}, \Big[\sqrt{-\eps^2 \Delta + m_0^2}, x \Big] \Big] \Big\rvert \tagg{eq:l1}\\
& + \frac{1}{\eps} \int_0^t \di s\, \tr \lvert [\omega_{N,s},[X_s, x]] \rvert \tagg{eq:l2}. 
\end{align*}
To control the term \eqref{eq:l2} we observe that
\begin{equation}\label{eq:exX} X_s = \frac{1}{N} \int dq \, \widehat{V} (q) \, e^{iq \cdot x} \omega_{N,s} e^{-iq \cdot x}, \end{equation}
where $x$ denotes the operator of multiplication by $x$. Since $\| \omega_{N,s} \| \leq 1$ (because of the assumption $0 \leq \omega_{N,s} \leq 1$, in accordance with Pauli's principle), we find
\begin{equation}\label{eq:328}\begin{split}
\tr \lvert[\omega_{N,s},[X_s, x]] \rvert & \leq  \frac{1}{N} \int \di q\, \lvert \widehat V(q)\rvert\,\tr \lvert[\omega_{N,s},[e^{iq\cdot x}\omega_{N,s} e^{-iq\cdot  x}, x]]\rvert \\
& \leq \frac{2}{N} \int \di q\, \lvert \widehat V(q)\rvert\,\tr \lvert[e^{iq\cdot  x}\omega_{N,s} e^{-iq\cdot x}, x]\rvert \\ &=  \frac{2}{N} \int \di q\, \lvert \widehat V(q)\rvert\,\tr \lvert e^{iq\cdot  x}[\omega_{N,s}, x]e^{-iq\cdot x}\rvert  \leq \frac{2 \| \widehat{V} \|_1}{N}  \tr \lvert [\omega_{N,s}, x]\rvert. 
\end{split}\end{equation}
To control \eqref{eq:l1} we notice that 
\[ \Big[\sqrt{-\eps^2 \Delta + m_0^2} \, , \, x \Big] = -\eps \frac{\eps \nabla}{\sqrt{-\eps^2 \Delta + m^2}}.\]
Hence
\[  \Big[ \omega_{N,s}, \Big[ \sqrt{-\eps^2\Delta+m_0^2}, x \Big] \Big] = -\eps [\omega_{N,s}, \eps \nabla] \frac{1}{\sqrt{-\eps^2 \Delta + m_0^2}} - \eps^2 \nabla \, \left[ \omega_{N,s}, \frac{1}{\sqrt{-\eps^2 \Delta + m_0^2}} \right] \]
and thus
\begin{equation} 
\label{eq:comm1}
\begin{split} \tr  \Big\lvert \Big[\omega_{N,s}, \Big[ \sqrt{-\eps^2\Delta+m_0^2}, x \Big] \Big] \Big\rvert \leq \; & \eps m_0^{-1} \tr \lvert [\eps\nabla, \omega_{N,s}]\rvert + \eps \tr \left\lvert \, \eps \nabla \left[ \omega_{N,s}, \frac{1}{\sqrt{-\eps^2 \Delta+ m_0^2}} \right] \right\rvert .\end{split}\end{equation}
Here we used the estimate $\| (-\eps^2 \Delta + m_0^2)^{-1/2} \| \leq m_0^{-1}$. To bound the second term on the r.\,h.\,s.\ we will use the integral representation 
\begin{equation}\label{eq:sqrtid}\frac{1}{\sqrt{A}} =  \frac{1}{\pi} \int_0^\infty \frac{\di \lambda}{\sqrt{\lambda}} (A+\lambda)^{-1}\end{equation}
and the identity
\[[(A+\lambda)^{-1}, B] = (A+\lambda)^{-1}[B,A](A+\lambda)^{-1} \]
for $A>0,B$ self-adjoint operators. 
Now consider the $j$-th component ($j \in \{ 1,2,3 \}$) of the operator whose trace norm we have to estimate:
\begin{align*}
\tr \Big\lvert \, \eps \partial_j &\left[ \omega_{N,s},\frac{1}{\sqrt{-\eps^2\Delta+m_0^2}} \right] \Big\rvert \\
& \leq \frac{1}{\pi}  \int_0^\infty \frac{\di \lambda}{\sqrt{\lambda}} \tr \left\lvert \eps\partial_j \frac{1}{{-\eps^2 \Delta+m_0^2+\lambda}} [\omega_{N,s}, \eps^2\Delta] \frac{1}{{-\eps^2\Delta+m_0^2+\lambda}} \right\rvert \\
& \leq  \frac{1}{\pi} \sum_{k=1}^3 \int_0^\infty \frac{\di \lambda}{\sqrt{\lambda}} \left\| \frac{-\eps^2 \partial_j \partial_k}{-\eps^2\Delta+m_0^2 +\lambda} \right\| \,  \tr\lvert [\omega_{N,s},\eps\partial_k] \rvert \left\| \frac{1}{-\eps^2\Delta+m_0^2+\lambda} \right\| \\
&\quad + \frac{1}{\pi} \sum_{k=1}^3 \int_0^\infty \frac{\di \lambda}{\sqrt{\lambda}} \left\| \frac{-i\eps\partial_j}{(-i\eps\nabla)^2+m_0^2+\lambda} \right\| \,  \tr\lvert [\omega_{N,s},\eps\partial_k] \rvert \, \left\| \frac{-i\eps\partial_k}{(-i\eps\nabla)^2+m_0^2+\lambda} \right\|. 
\end{align*}
Using the bounds $\| (-\eps^2 \Delta + m_0^2 + \lambda)^{-1} \| \leq (m_0^2 + \lambda)^{-1}$, 
\[ \begin{split} 
\left\| \frac{-i \eps \partial_k}{ -\eps^2 \Delta + m_0^2 + \lambda} \right\| \leq \frac{1}{\sqrt{m_0^2 + \lambda}} \quad \text{and } \quad
 \left\| \frac{- \eps^2 \partial_k \partial_j}{ -\eps^2 \Delta + m_0^2 + \lambda} \right\| &\leq 1, \end{split} \] 
all of which can be easily proved in Fourier space, we conclude that
\[ \begin{split} \tr \left\lvert \, \eps \partial_j \left[ \omega_{N,s},\frac{1}{\sqrt{-\eps^2\Delta+m_0^2}} \right] \right\rvert 
&\leq C \, \tr \, | [ \eps \nabla, \omega_{N,s}] | \, \int_0^\infty \frac{d\lambda}{\sqrt{\lambda}} \frac{1}{\lambda+m_0^2} \\ &\leq C m_0^{-1}  \tr \, | [ \eps \nabla, \omega_{N,s}] |.  \end{split} \]
Inserting this estimate in (\ref{eq:comm1}), we obtain
\[ \tr  \Big\lvert \Big[\omega_{N,s}, \Big[ \sqrt{-\eps^2\Delta+m_0^2}, x \Big] \Big] \Big\rvert \leq \; C \eps m_0^{-1} \tr \lvert [\eps\nabla, \omega_{N,s}]\rvert. \]
Plugging this bound and (\ref{eq:328}) into (\ref{eq:l1}) and (\ref{eq:l2}), we arrive at 
\begin{equation}
\label{eq:deriv1}
 \tr\lvert [ x,\omega_{N,t}]\rvert \leq \tr\lvert [ x,\omega_{N,0}]\rvert + Cm_0^{-1} \int_0^t \di s \, \tr \lvert [\eps\nabla, \omega_{N,s}] \rvert + C N^{-2/3} \int_{0}^t \di s \tr\lvert[x, \omega_{N,s}] \rvert.
\end{equation}

Next, we bound the growth of the commutator $[\eps \nabla, \omega_{N,t}]$. Since the kinetic energy commutes with the observable $\eps \nabla$, we can proceed here as in the non-relativistic case considered in \cite{BPS}. For completeness, we reproduce the short argument. Differentiating w.\,r.\,t.\ time and applying Jacobi identity, we find
\[ \begin{split}  i\eps \frac{d}{dt} \, [\eps \nabla , \omega_{N,t} ] =& [\eps \nabla, [h (t) , \omega_{N,t} ]] \\
=& [h (t) ,[\eps \nabla, \omega_{N,t}]] + [\omega_{N,t}, [ h (t), \hbar \nabla]] \\ =& [h (t) , [\eps \nabla, \omega_{N,t}]] + [\omega_{N,t}, [ V*\rho_t , \eps \nabla]] - [ \omega_{N,t}, [X_t, \eps \nabla]] \, .
\end{split} \]
As before, the first term on the r.\,h.\,s.\ can be eliminated by conjugation with the unitary maps $W(t;0)$ defined in (\ref{eq:W}). Thus we find
\begin{equation}\label{eq:est-comm3} \begin{split} 
\tr | [ \eps \nabla , \omega_{N,t}] | & \leq \tr \, | [ \eps \nabla , \omega_{N,0}] |\\
&\quad + \frac{1}{\eps} \int_0^t ds \, \tr | [\omega_{N,s} ,  [ V*\rho_s , \eps \nabla]] | + \frac{1}{\eps} \int_0^t ds \, \tr \, | [ \omega_{N,s}, [X_s, \eps \nabla]]|. \end{split} \end{equation}
The second term on the r.h.s. of the last equation can be controlled by
\[ \begin{split} 
\tr | [\omega_{N,s} ,  [ V*\rho_s , \eps \nabla]] | &= \eps \, \tr |[\omega_{N,s} , \nabla (V * \rho_s)]| \\ &\leq \eps \int dq \, |\widehat{V} (q)||q| |\widehat{\rho}_s (q)|  \, \tr |[ \omega_{N,s} , e^{iq \cdot x} ] | \\ &\leq \eps \left( \int dq \, |\widehat{V} (q)| (1+|q|)^{2} \right)  \sup_q \frac{1}{1+|q|} \tr  |[ \omega_{N,s} , e^{iq \cdot x} ] | \\ &\leq C \eps \, \tr |[x,\omega_{N,s}]| 
\end{split} \]
where we used the bound $\| \widehat{\rho}_s \|_\infty \leq \| \rho_s \|_1 = 1$, the estimate (\ref{eq:exp-bd}) and the assumption (\ref{eq:ass2}) on the interaction potential. As for the last term on the r.h.s. of (\ref{eq:est-comm3}), we note that, writing the exchange operator as in (\ref{eq:exX}), 
\[ \begin{split} \tr \left|[\omega_{N,s}, [ X_s, \eps\nabla]] \right| &\leq \frac{1}{N}\int dq\, |\widehat{V} (q)| \, \tr \left| [\omega_{N,s} , [ e^{iq \cdot x} \omega_{N,s} e^{-iq \cdot x}, \eps \nabla]] \right| \\ 
 &\leq \frac{2}{N} \int dq \, |\widehat{V} (q)| \tr |[ e^{iq\cdot x} \omega_{N,s} e^{-i q \cdot x}, \eps \nabla]| \\ &\leq \frac{2 \| \widehat{V} \|_1}{N} \,  \tr |[\omega_{N,s}, \eps \nabla]|. \end{split} \]
In the last inequality we used that 
\[
[ e^{iq\cdot x} \omega_{N,s} e^{-iq\cdot x}, \eps\nabla] = e^{iq \cdot x}[ \omega_{N,s} , \eps(\nabla + iq)] e^{-iq \cdot x} = e^{iq \cdot x} [\omega_{N,s}, \eps \nabla] e^{-iq\cdot x}\;.
\]
{F}rom (\ref{eq:est-comm3}), we conclude that
\[ 
\tr | [ \eps \nabla , \omega_{N,t}] | \leq \tr |[\eps \nabla , \omega_{N,0}]| +  C \int_0^t ds \, \tr  |[ x, \omega_{N,s} ] | + C N^{-2/3} \int_0^t ds \, \tr |[\eps \nabla , \omega_{N,s}]|.  \]
Summing up the last equation with (\ref{eq:deriv1}), using the conditions (\ref{eq:semi}) on the initial data and applying Gronwall's lemma, we find constants $C,c > 0$ such that 
\[ \begin{split} 
\tr | [x, \omega_{N,t}] | &\leq C N \eps \, \exp (c |t|) \quad \text{and}\\ 
\tr | [ \eps \nabla , \omega_{N,t}]  &\leq C N \eps \, \exp (c |t|). \qedhere\end{split}\]
\end{proof}

\end{document}